 \numberwithin{equation}{section}
 \newtheorem{theorem}{Theorem}[section]
\newtheorem{lemma}[theorem]{Lemma}
\newtheorem{proposition}[theorem]{Proposition}
\newtheorem{corollary}[theorem]{Corollary}
\theoremstyle{definition}
\newtheorem{definition}[theorem]{Definition}
 \theoremstyle{remark}
\newtheorem{example}[theorem]{Example}
\newcommand{\N}{{\mathbb N}}
\newcommand{\Z}{{\mathbb Z}}
\newcommand{\F}{\mathbb F}
\renewcommand{\b}[1]{\mathbf{#1}}
\newcommand{\Le}{\mathbb L}
\newcommand{\Q}{{\mathcal Q}}
\newcommand{\D}{{\mathcal D}}
\newcommand{\lcm}{\mathrm{lcm}}
\newcommand{\tq}{\; : \;}
\newcommand{\doble}[2]{\genfrac{}{}{0cm}{2}{#1}{#2}}
\begin{document}

\title{Constructions of Abelian Codes multiplying dimension of cyclic codes
\thanks{This work was partially supported by MINECO, project MTM2016-77445-P, and Fundaci\'{o}n S\'{e}neca of Murcia, project 19880/GERM/15. The second author has been supported by Departamento Administrativo de Ciencia, Tecnolog\'{\i}a e Innovaci\'on de la Rep\'ublica de Colombia.}}

\author{\IEEEauthorblockN{Jos\'e Joaqu\'{i}n Bernal\IEEEauthorrefmark{1},
Diana H. Bueno-Carre\~no\IEEEauthorrefmark{2} and
Juan Jacobo Sim\'on\IEEEauthorrefmark{1}. \\
\IEEEauthorblockA{\IEEEauthorrefmark{1}Departamento de Matem\'{a}ticas\\
Universidad de Murcia,
30100 Murcia, Spain.\\ Email: \{josejoaquin.bernal, jsimon\}@um.es} \\
\IEEEauthorblockA{\IEEEauthorrefmark{2}Departamento de Ciencias Naturales y Matem\'{a}ticas\\
Pontificia Universidad Javeriana, 
 Cali, Colombia\\
 Email: dhbueno@javerianacali.edu.co}
}}

\maketitle

\begin{abstract}
In this note, we apply some techniques developed in \cite{BBCS2,BBCS true,BGS} to give a particular construction of bivariate Abelian Codes from cyclic codes,  multiplying their dimension and preserving their apparent distance. We show that, in the case of cyclic codes whose maximum BCH bound equals its minimum distance the obtained abelian code verifies the same property; that is, the strong apparent distance and the minimum distance coincide.  We finally use this construction to multiply Reed-Solomon codes to abelian codes.
\end{abstract}



\section{Introduction}
In \cite{BBCS2}, we improve the notion and computation of the apparent distance for abelian codes given in \cite{Camion} and \cite{Evans} by means of the $q$-orbit structure of defining sets of abelian codes. These results allows us to design, based on a suitable election of $q$-orbits, abelian codes having nice bounds and parameters. In this note, we apply those techniques to construct bivariate BCH codes from cyclic codes, in such a way that we preserve apparent distance but multiplying their dimension. We show that, in the case of cyclic codes whose maximum BCH bound equals its minimum distance the obtained abelian code verifies the same property; that is, the strong apparent distance and the minimum distance coincide; in particular, this drives us to multiply dimension Reed-Solomon codes to abelian codes preserving the true minimum distance.  \textbf{As it happens with others families of abelian codes, there are alternative constructions to get this one (see, for example \cite{Jensen}). We know that each alternative construction shows different structural properties that allows us to see easily, some specific qualities or parameters; as it happens in our case with the apparent distance and the true minimum distance.}

\section{Notation and preliminaries}\label{preliminares}

In this section, we introduce the basic concepts and preliminary results. We shall restrict all notions to the bivariate abelian codes; so throughout this paper, Abelian Code will be an ideal in group algebras $\F_q G$, where $\F_q$ denotes the field with $q$ elements with $q$ a power of a prime $p$ and $G$ is an abelian group with a decomposition $G\simeq C_{r_1}\times C_{r_2}$, where $C_{r_i}$ the cyclic group of order $r_i$, for $i=1,2$. It is well-known that this decomposition induces a canonical isomorphism of $\F_q$-algebras from $\F_q G$ to 
   $$\F_q[X,Y]/\left\langle X^{r_1}-1,Y^{r_2}-1\right\rangle.$$ 
   
 We denote this quotient algebra by $\F_q(r_1,r_2)$. So, we identify the codewords with polynomials $f=f(X,Y)$ such that every monomial satisfy that the degree of the indeterminate $X$ belongs to $\Z_{r_1}$ and the degree of $Y$ belongs to $\Z_{r_2}$, where $\Z_{r_i}$ is the ring of integers modulo $r_i$, for $i=1,2$, that we always write as canonical representatives; so that, for any $a\in \Z$ we denote the canonical representative by $\overline{a}$ if it is possible that $a\not\in\{0,\dots,r_i-1\}$, otherwise we only write $\overline{a}=a$.  We deal with abelian codes in the semisimple case; that is, we always assume that $\gcd(r_i,q)=1$ for  $i=1,2$.
 
 We denote $I=\Z_{r_1}\times \Z_{r_2}$ and for $i=1,2$, we denote by $U_{r_i}$ the set of all $r_i$-th primitive roots of unity and define $U=U_{r_1}\times U_{r_2}$. It is a known fact that, for a fixed $\hat{\alpha}=(\alpha,\beta) \in U$, any abelian code $C$ is  determined by its defining set, with respect to $\hat{\alpha}$, which is defined as 
\[\D_{\hat{\alpha}}\left(C\right)= \left\{ (a,b)\in I \tq c(\alpha^{a},\beta^b)=0, \;\forall c\in C\right\}.\]

We note that, for any element  $f\in \F_q(r_1,r_2)$, viewed as a polynomial, we may also define its defining set as $\D_{\hat \alpha}(f)=\{(a,b)\in I \tq f(\alpha^a,\beta^b)=0\}$.

Given an element $\b{a}=(a,b)\in I$, we define its $q$-orbit modulo  $\left(r_1,r_2 \right)$ as the set 
$$ Q(\b a)=\left\{\left(\overline{a\cdot q^{i}} ,\overline{ b\cdot q^{i}}  \right) \in I \tq i\in \N\right\}.$$
In our case, it is known that the defining set  $\D_{\hat{\alpha}}\left(C\right)$ is a disjoint union of $q$-orbits modulo $(r_1,r_2)$. Conversely, every union of $q$-orbits modulo $(r_1,r_2)$ determines a bivariate abelian code (see \cite{BBCS2,Camion} or \cite{Imai} for details).  We recall that the notion of defining set in the case of cyclic codes corresponds to the notion of $q$-cyclotomic coset of a positive integer $a$ modulo $n$.

Let $\Le|\F_q$ be an extension field containing $U_{r_i}$, for   $i=1,2$. The discrete Fourier transform of a polynomial $f\in \F_q(r_1,r_2)$ with respect to $\hat{\alpha}=(\alpha,\beta) \in U$ (also called Mattson-Solomon polynomial in \cite{Evans}) is the polynomial  $\varphi_{\hat{\alpha},f}=\varphi_{\hat{\alpha},f}(X,Y)=\sum_{(i,j)\in I} f(\alpha^i,\beta^j) X^iY^j\in \Le(r_1,r_2).$ 

It is known that the discrete Fourier transform may be viewed as an isomorphism of algebras  $\varphi_{\hat{\alpha}}:\Le(r_1,r_2)\longrightarrow (\Le^{|I|},\star)$, where the multiplication ``$\star$'' in $\Le^{|I|}$ is defined coordinatewise. Thus, we may see $\varphi_{\hat{\alpha},f}$ as a vector in $\Le^{|I|}$ or as a polynomial in $\Le(r_1,r_2)$ (see \cite[Section 2.2]{Camion}).

As it is usual, we denote by $M=\left(a_{ij}\right)_I$ the matrix indexed by $I$ (or the $I$-matrix) with entries in a ring $R$, and write $a_{ij}=M(i,j)$; in the case of vectors we write $v=(a_i)_{\Z_r}$. For an easy identification of the reader to the results in \cite{BBCS2}, we denote the $i$-th row of $M$ as $H_M(1,i)$  and the $j$-th column of $M$ as $H_M(2,j)$. 

In \cite{BBCS2}, the following definition is used to compute the apparent distance of an abelian code. Let $D\subseteq I$. The matrix afforded by $D$ is defined as $M=\left(a_{ij}\right)_I$ where $a_{ij}=1$ if $(i,j)\not\in D$ and $a_{ij}=0$ otherwise. When $D$ is an union of $q$-orbits we say that $M$ is a $q$-orbit matrix, and it will be denoted by $M=M(D)$. For any $I$-matrix $M$ with entries in a ring, we define the support of $M$ as the set $supp(M)=\left\{(i,j)\in I \tq a_{ij}\neq 0\right\}$, whose complement with respect to $I$ will be denoted by $\D(M)$. Note that, if $D$ is a union of $q$-orbits then the $q$-orbit matrix afforded by $D$ verifies that  $\D(M(D))=D$. Finally, we denote the matrix of coefficients of a polynomial $f\in \F(r_1,r_2)$ by $M(f)$.

Let $\Q$ be the set of all the $q$-orbits in $I$. We define a partial ordering over the set of $q$-orbits matrices $\left\{M(D)\tq D= \cup Q,\text{ for some }Q\in \Q\right\}$ as follows:
\begin{equation}\label{matrixordering}
M(D)\leq M(D') \Leftrightarrow supp\left(M(D)\right)\subseteq supp\left(M(D')\right).
\end{equation}
Clearly, this condition is equivalent to $D'\subseteq D$.

\section{The strong apparent distance and multivariate BCH codes}\label{seccion distancia aparente}

In \cite{BBCS2}, we introduced the notion of strong apparent distance of polynomials and hypermatrices and we applied it to define and study a notion of multivariate BCH bound and BCH abelian codes. As it was pointed out in the mentioned paper, the notion of strong apparent distance was based in the ideas and results in \cite{Camion} and \cite{Evans}. In this section, we recall some notions and results in \cite{BBCS2} restricted to matrices; that is, the bivariate case, because these are the only results that we will use, and it is much simpler to expose. 

For a positive integer $r$, we say that a list of canonical representatives  $b_0,\dots,b_l$ in $\Z_r$ is a list of consecutive integers modulo $r$, if for each $0\leq k<l$ we have that $b_{k+1}= \overline{b_{k}+1}$ in $\Z_r$ and so $b_k=\overline{b_{k+1}-1}$. If $b=b_k$ (resp. $b=b_{k+1}$) we denote $b^+=b_{k+1}$ (resp. $b^-=b_k$).

\begin{definition}\label{hipercolumnas cero consecutivas}
Let $M$ be a matrix over $\F_q$. For any  $k\in \{1,2\}$ and $b\in \Z_{r_k}$, the set of zero rows (if $k=1$) or columns (if $k=2$) of $M$ associated to the pair $(k,b)$ is the set
 \[CH_M(k,b)=\{H_M(k,b_0), \dots , H_M(k,b_l)\}\quad\text{with}\;b=b_0,\]
such that $H_M(k,b_j)=0$ for all $j\in \{0,\dots, l\}$,  $b_0,\dots,b_l$ is a list of consecutive integers modulo $r_k$ and $H_M(k,b_l^+)\neq 0$. We denote by $\omega_M(k,b)$ the value $|CH_M(k,b)|$; in the case of vectors we write $\omega_M(b)=\omega_M(1,b)$.

We define  $\omega_M(k,b)=0$ if $H_M(k,b)\neq 0$.
\end{definition}

\begin{definition}\label{DAF hipermatrices}
Let $q$, $r_1,r_2$ and $I$ be as above and let $M$ be a matrix over $\F_q$. The strong apparent distance of $M$, denoted by $sd^\ast(M)$, is defined as follows
\begin{enumerate}
 \item $sd^\ast(0)=0$.
\item The strong apparent distance of a vector $M=v$ is
\[sd^\ast(v)=\max\{\omega_v(b) +1 \tq b\in \Z_{r}\}.\]
 
\item  In the case of matrices, we proceed as follows, for $k\in \{1,2\}$:
 \begin{eqnarray*}
\epsilon_M(k)&=&\max\{sd^\ast(H_M(k,b)) \tq b\in \Z_{r_k}\} ;\\
\omega_M(k)&=&\max\{\omega_M(k,b) \tq b\in \Z_{r_k}\}.
\end{eqnarray*}
Then

\begin{itemize}
\item[3.1)]  The strong apparent distance of $M$ with respect to the $k$-th variable is 
$sd_k^* (M) =\epsilon_M(k)\cdot(\omega_M(k)+1)$ and
\item[3.2)] the strong apparent distance of $M$ is $sd^\ast (M) =\max_{1 \leq k \leq 2}\{ sd_k^*(M) \}.$
\end{itemize}
\end{enumerate}
 \end{definition}

Now we recall the definition of strong apparent distance of an abelian code in \cite{BBCS2} in the bivariate case.

\begin{definition}\label{apparentdistance}
 Let $C$ be a code in $\F_q(r_1,r_2)$. The strong apparent distance of $C$, with respect to $\hat\alpha \in U$, is $sd_{\hat \alpha}^* (C)= \min\left\{sd^\ast \left(M(\varphi_{\hat\alpha,e})\right) \tq 0\neq e^2 = e \in C\right\}$. The strong apparent distance of $C$ is $sd^\ast (C)= \max\left\{sd_{\hat\beta}^\ast (C) \tq \hat\beta \in U \right\}$. 
 
 We also define the set of optimized roots of $C$ as $\mathcal R(C)=\{\hat\beta \in U \tq sd^\ast (C)=sd_{\hat\beta}^\ast (C)\}.$
\end{definition}

In \cite{BBCS2} it is proved that, for any $f \in C$ and $\hat\alpha \in U$, $sd^*_{\hat\alpha}(C)=\min\{sd^*(M(\varphi_{\hat\alpha,c})) \tq c \in C\}$ and the weight of $f$ verifies $\omega (f) \geq sd^*(M(\varphi_{\hat\alpha,f}))$ (see also \cite{Camion,Evans}); so that, the strong apparent distance of an abelian code is a lower bound for the minimum distance; in fact, the strong apparent distance of any cyclic code, in the obvious sense, is exactly the maximum of all its BCH bounds (what P. Camion calls the BCH bound of an abelian code) \cite[pp. 21-22]{Camion}. Indeed, the following result is proved.

\begin{theorem}{\cite[Theorem 16]{BBCS2}}
 For any abelian code $C$ in $\F_q(r_1,r_2)$ the inequality $sd^\ast(C)\leq d(C)$ holds. 
\end{theorem}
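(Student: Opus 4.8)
The plan is to reduce the statement $sd^\ast(C) \le d(C)$ to the single-idempotent estimate already recalled from \cite{BBCS2}, namely that for every $f \in C$ and every $\hat\alpha \in U$ one has $\omega(f) \ge sd^\ast(M(\varphi_{\hat\alpha,f}))$, together with the identity $sd^\ast_{\hat\alpha}(C) = \min\{sd^\ast(M(\varphi_{\hat\alpha,c})) \tq c \in C\}$. Fix an arbitrary nonzero $f \in C$ and pick $\hat\beta \in \mathcal R(C)$, so that $sd^\ast(C) = sd^\ast_{\hat\beta}(C)$. Then the first step is simply to chain the inequalities: $\omega(f) \ge sd^\ast(M(\varphi_{\hat\beta,f})) \ge \min\{sd^\ast(M(\varphi_{\hat\beta,c})) \tq c \in C\} = sd^\ast_{\hat\beta}(C) = sd^\ast(C)$. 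Since this holds for every nonzero $f \in C$, taking the minimum over such $f$ gives $d(C) \ge sd^\ast(C)$, which is the claim.

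To make this airtight I would first make sure the weight bound is being applied in the right direction: $\omega(\varphi_{\hat\beta,f})$ and $\omega(f)$ are linked because $\varphi_{\hat\beta}$ is an algebra isomorphism onto $(\Le^{|I|},\star)$ and the transform of a nonzero codeword is again nonzero; the bound $\omega(f)\ge sd^\ast(M(\varphi_{\hat\beta,f}))$ is exactly the content quoted from \cite{BBCS2} (and ultimately from \cite{Camion,Evans}), so I would cite it rather than reprove it. The only genuine subtlety is that $sd^\ast_{\hat\beta}(C)$ in Definition~\ref{apparentdistance} is defined as a minimum over nonzero \emph{idempotents} $e \in C$, not over all codewords; so I need the auxiliary fact, also recalled in the paragraph before the theorem, that this minimum over idempotents equals the minimum over \emph{all} $c \in C$. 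Granting that, the argument above goes through verbatim.

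The main obstacle, such as it is, is bookkeeping about which quantifier ($\hat\alpha$ versus $\hat\beta \in U$) is used where: $sd^\ast(C)$ is a \emph{maximum} over $\hat\beta \in U$ of the quantities $sd^\ast_{\hat\beta}(C)$, while the weight bound holds for \emph{every} $\hat\alpha$. The key point is that choosing $\hat\beta$ to be an optimized root lets us use the per-$\hat\beta$ weight bound at the value of $\hat\beta$ that realizes the maximum, so there is no conflict: the bound $\omega(f) \ge sd^\ast_{\hat\alpha}(C)$ holds for every $\hat\alpha$, and in particular for an optimized $\hat\beta$, where the right-hand side equals $sd^\ast(C)$. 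I expect the whole proof to be three or four lines once the two cited facts from \cite{BBCS2} are in hand; there is no hard computation, only the assembly of the quoted lemmas in the correct order.
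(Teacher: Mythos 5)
Your proposal is correct and follows essentially the same route the paper itself indicates: the inequality is exactly the chaining of the two recalled facts from \cite{BBCS2} (the weight bound $\omega(f)\geq sd^\ast(M(\varphi_{\hat\alpha,f}))$ and the identification of $sd^\ast_{\hat\alpha}(C)$ as a minimum over nonzero codewords rather than just idempotents), evaluated at an optimized root $\hat\beta\in\mathcal R(C)$. Your handling of the idempotent-versus-codeword subtlety and of the max/min quantifiers is the right bookkeeping, so nothing further is needed.
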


In \cite{BBCS2}, $q$-orbit matrices and coefficient matrices of the images of the discrete Fourier transform of idempotent elements in $\F(r_1,r_2)$ are related, for a fixed $\hat\alpha=(\alpha,\beta)$, as follows. For any idempotent $e\in \F_q(r_1,r_2)$, let $E$ by its generated ideal. Then $M(\varphi_{\hat\alpha,e})=M(\D_{\hat\alpha}(E))$. 
Conversely, any $q$-orbit matrix corresponds with an idempotent; that is, if $P\leq M(\D_{\hat\alpha}(E))$ [see (\ref{matrixordering})] then there exists an idempotent $e'\in E$ such that $P=M(\varphi_{\hat\alpha,e'})$. So it is concluded that the apparent distance of an abelian code $C$ with $M=M(\D_{\hat\alpha}(C))$ may be computed by means of $q$-orbits matrices $P\leq M(\varphi_{\hat\alpha,e})$; that is $\min\{sd^\ast(P)\tq P\leq M\}=\min\{sd^\ast(M(\varphi_{\hat\alpha,e}))\tq e^2=e\in C\}$. This fact drives us to the following definition.

\begin{definition}
 In the setting described above, for a $q$-orbits matrix $M$, the minimum strong apparent distance is
	\[msd(M)=\min\{sd^\ast(P)\tq P\leq M\}.\]
\end{definition}

Finally, one has \cite[Theorem 18]{BBCS2} that for any abelian code $C$ in $\F_q(r_1,r_2)$ with generating idempotent $e$ it happens  $sd_{\alpha}^\ast (C)=msd\left(M(\varphi_{\alpha,e})\right)$ ($\alpha\in U$). Therefore, 
\begin{equation}
 sd^\ast (C)=\max\{msd\left(M(\varphi_{\hat\alpha,e})\right): \hat\alpha\in U\}.
\end{equation}

In \cite[Section IV]{BBCS2} it is presented an algorithm to find, for any abelian code, a list of matrices (or hypermatrices in case of more than 2 variables) representing some of its idempotents  whose strong apparent distances go decreasing until the minimum value is reached. It is a kind of ``suitable idempotents chase through hypermatrices'' \cite[p. 2]{BBCS2}. This algorithm is based on certain manipulations of the $q$-orbit matrix afforded by the defining set of the abelian code. We recall a result of this section that we use repeatedly.

\begin{proposition}{\cite[Proposition 23]{BBCS2}}\label{matrizdamvarias}
Let $D$ be a union of $q$-orbits and $M=M(D) \neq 0$. For $1\leq k\leq 2$ and $b\in \Z_{r_k}$, let $H_M(k,b)$ be a row or column such that $sd^\ast (M)=(\omega_M(k)+1)sd^\ast (H_M(k,b))$; that is, involved row or column in the computation of the strong apparent distance. If $sd^\ast(H_M(k,b))=1$ then $msd(M)=sd^\ast (M)$.
\end{proposition}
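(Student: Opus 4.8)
The plan is to prove the equality via the two one-sided inequalities, almost all of the content lying in one of them. Since $M=M(D)\ne 0$ and $M\le M$, the matrix $M$ is itself among the (nonzero) $q$-orbit matrices $P\le M$ over which $msd(M)$ is taken, so $msd(M)\le sd^\ast(M)$ for free; the real work is to show $sd^\ast(P)\ge sd^\ast(M)$ for \emph{every} nonzero $q$-orbit matrix $P\le M$. First I would rewrite the hypothesis: substituting $sd^\ast(H_M(k,b))=1$ into $sd^\ast(M)=(\omega_M(k)+1)\,sd^\ast(H_M(k,b))$ yields the clean identity $sd^\ast(M)=\omega_M(k)+1$ (and forces $\epsilon_M(k)=1$, although I do not expect to need that), and I fix this distinguished index $k\in\{1,2\}$ once and for all.

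The heart of the argument is two monotonicity observations about passing from $M$ to a smaller $q$-orbit matrix $P\le M$, i.e. one with $supp(P)\subseteq supp(M)$. First, every zero row (if $k=1$) or zero column (if $k=2$) of $M$ is still zero in $P$, so a longest run of consecutive zero rows/columns of $M$, of length $\omega_M(k)$, starting at some position $b_0$, remains a run of at least that many consecutive zero rows/columns of $P$; since $P\ne 0$ there is a nonzero row/column to terminate that run, so $\omega_P(k,b_0)$ is well-defined and $\omega_P(k)\ge\omega_P(k,b_0)\ge\omega_M(k)$. Second, a nonzero $P$ has a nonzero row and a nonzero column, and any nonzero vector $v$ satisfies $sd^\ast(v)\ge 1$ by Definition~\ref{DAF hipermatrices}(2) (with equality exactly when $v$ has no zero coordinate); hence $\epsilon_P(k)\ge 1$. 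Combining these,
\[
sd^\ast(P)\ \ge\ sd_k^\ast(P)\ =\ \epsilon_P(k)\,(\omega_P(k)+1)\ \ge\ 1\cdot(\omega_M(k)+1)\ =\ sd^\ast(M),
\]
and since $P$ was an arbitrary nonzero $q$-orbit matrix below $M$ this gives $msd(M)\ge sd^\ast(M)$, hence equality with the first paragraph.

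The step that will need the most care is bookkeeping rather than an idea: one must keep in mind that the minimum defining $msd$ ranges over \emph{nonzero} $q$-orbit matrices $P\le M$ (equivalently, over nonzero idempotents) --- otherwise $P=M(I)=0$ would be admissible and would force $msd(M)=0$ --- because it is precisely the hypothesis $P\ne 0$ that makes $\omega_P(k,b_0)$ well-defined and guarantees $\epsilon_P(k)\ge 1$. Everything else is a direct unwinding of Definitions~\ref{hipercolumnas cero consecutivas} and~\ref{DAF hipermatrices}; in particular no interaction between the two variables enters, since the whole computation stays in the single direction $k$, and the hypothesis $sd^\ast(H_M(k,b))=1$ is exactly what collapses the factor $\epsilon_M(k)$ in $sd_k^\ast(M)$ to $1$, leaving the quantity ``one plus the longest run of zero rows/columns in direction $k$'', which can only grow when the support shrinks.
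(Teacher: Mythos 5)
Your argument is correct. Note that this paper does not prove the proposition at all --- it is quoted from \cite[Proposition 23]{BBCS2} --- so there is no in-paper proof to compare with; your monotonicity argument (zero rows/columns in direction $k$ persist when the support shrinks, so $\omega_P(k)\geq \omega_M(k)$ and $\epsilon_P(k)\geq 1$ for any nonzero $P\leq M$, while the hypothesis $sd^\ast(H_M(k,b))=1$ collapses $sd^\ast(M)$ to $\omega_M(k)+1$) is exactly the natural route and matches the spirit of the cited source. Your observation that the minimum in $msd$ must be taken over nonzero $q$-orbit matrices (equivalently, nonzero idempotents, as in Definition~\ref{apparentdistance}) is a genuine and worthwhile clarification, since the literal definition recalled here would admit $P=M(I)=0$ and trivialize the statement.
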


Now we recall the definition of multivariate BCH code in the two-dimensional case.

\begin{definition}{\cite[Definition 33]{BBCS2}}\label{codigo bch multivariable}
Let $q$, $r_1,r_2$ and $I$ be as above. Let $\gamma \subseteq \{1,2\}$ and $\delta=\{r_k\geq \delta_k\geq 2\tq k\in\gamma\}$. An abelian code $C$ in $\F_q(r_1,r_2)$ is a bivariate BCH code of designed distance $\delta$ if there exists a list of positive integers $b=\{b_k\tq k\in\gamma\}$ such that
\[\D_{\hat\alpha}(C)=\bigcup_{k\in\gamma} \bigcup_{l=0}^{\delta_k-2}\bigcup_{\b{i}\in I(k,\overline{b_k+l})}Q(\b{i})\]
for some $\hat\alpha \in U$, where $\{\overline{b_k},\dots,\overline{b_k+\delta_k-2}\}$ is a list of consecutive integers modulo $r_k$ and $I\left(k,\overline{b_{k}+l}\right)=\left\{\b{i}\in I\tq \b{i}(k)= \overline{b_{k}+l} \right\}$.

\smallskip

We denote $C=B_q(\alpha,\gamma,\delta,b)$, as usual.
\end{definition}

As a direct consequence of \cite[Theorem 30]{BBCS2} we have $sd^\ast \left(B_q(\hat\alpha,\gamma,\delta,b)\right) \geq \prod_{k\in\gamma}\delta_k$; also, from \cite[Theorem 36]{BBCS2}, we have that 
\[\dim_{\F_q}B_q(\hat\alpha,\gamma,\delta,b)\geq r_1r_2-\lcm\left\{\mathcal{O}_{r_1}(q),\mathcal{O}_{r_2}(q)\right\}\left(\sum_{k\in\gamma} \left(\delta_k-1 \right) \prod_{\doble{j\in\{1,2\}}{j\neq k}} r_j   \right),\] where $\mathcal{O}_{r_k}(q)$ denotes the multiplicative order of $q$ modulo $r_k$. In fact, $B_q(\alpha,\gamma,\delta,b)$ is the abelian code with highest dimension over $\F_q$, whose defining set contains $\cup_{k=1}^2\{\overline{b_k},\dots,\overline{b_k+\delta_k-2}\}$, of the sets described above, as it happens with the (cyclic) BCH codes \cite[Corollary 35]{BBCS2}.

\begin{example}\rm{
Let $C_1$ and $C_2$ be abelian codes in $\F_{2^2}(7,9)$ with defining sets $\D(C_1)=Q(0,0)\cup Q(1,0)\cup Q(3,0)\cup\left(\cup_{t=0}^6 Q(t,1)\right)$ and $\D(C_2)=\D(C_1)\cup Q(0,2)\cup Q(0,3)\cup Q(0,6)$. Then one may check that $C_1=B_{2^2}\left(\hat\alpha,\{2\},\{3\},\{0\}\right)$ and $C_2=B_{2^2}\left(\hat\alpha,\{1,2\},\{2,3\},\{0,0\}\right)$.
}\end{example}

\section{Multiplying dimension in abelian codes}

We shall construct abelian codes starting from BCH (univariate) codes with designed distance $\delta\in \N$. We keep all notation from the preceding sections.

\begin{lemma}\label{lema dimension multiplicada}
 Let $D$ be a union of $q$-orbits modulo $(r_1,r_2)$ and consider the $q$-orbits matrix $M=M(D)$. The following conditions on $M$ are equivalent:
\begin{enumerate}
 \item \label{condicion 1 lema dimension multi} Each column $H_M(2,j)$ verifies that either $H_M(2,j)=0$ or all of its entries have constant value $1$.
\item \label{condicion 2 lema dimension multi} For all $(i,j)\in\Z_{r_1}\times\Z_{r_2}$, it happens that $(i,j) \in D$ if and only if $(x,j) \in D$ for all $x \in \Z_{r_1}$. 
\end{enumerate}
\end{lemma}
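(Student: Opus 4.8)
The plan is to prove the two conditions are equivalent by unwinding the definitions, since both are really statements about which columns of the matrix $M(D)$ are zero. Recall that $M = M(D)$ is defined by $a_{ij} = 1$ if $(i,j) \notin D$ and $a_{ij} = 0$ if $(i,j) \in D$, so that $\D(M) = D$. In particular, for a fixed column index $j$, the entry $a_{ij}$ equals $0$ precisely when $(i,j) \in D$, and equals $1$ precisely when $(i,j) \notin D$.

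For the direction \eqref{condicion 1 lema dimension multi} $\Rightarrow$ \eqref{condicion 2 lema dimension multi}, I would fix $(i,j) \in \Z_{r_1} \times \Z_{r_2}$ and argue as follows. Suppose $(i,j) \in D$; then $a_{ij} = 0$, so the column $H_M(2,j)$ has a zero entry, hence by \eqref{condicion 1 lema dimension multi} it cannot be the all-ones column, so it must be the zero column; therefore $a_{xj} = 0$ for every $x \in \Z_{r_1}$, i.e. $(x,j) \in D$ for all $x$. Conversely, if $(x,j) \in D$ for all $x \in \Z_{r_1}$, then in particular $(i,j) \in D$. This gives the biconditional in \eqref{condicion 2 lema dimension multi}.

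For the direction \eqref{condicion 2 lema dimension multi} $\Rightarrow$ \eqref{condicion 1 lema dimension multi}, I would fix a column index $j$ and distinguish two cases according to whether there exists some $i$ with $(i,j) \in D$. If no such $i$ exists, then $a_{ij} = 1$ for all $i$, so $H_M(2,j)$ is the all-ones column. If there is some $i_0$ with $(i_0,j) \in D$, then by \eqref{condicion 2 lema dimension multi} we get $(x,j) \in D$ for all $x \in \Z_{r_1}$, hence $a_{xj} = 0$ for all $x$, so $H_M(2,j) = 0$. In either case $H_M(2,j)$ is either zero or constantly $1$, which is \eqref{condicion 1 lema dimension multi}.

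I do not expect a serious obstacle here: the statement is essentially a reformulation of the definition of the matrix afforded by $D$, and the only thing to be careful about is the bookkeeping between "entry is $0$" and "index lies in $D$", together with noting that the hypothesis that $M$ is a $q$-orbit matrix (so that $\D(M) = D$) is what lets us pass freely between the matrix picture in \eqref{condicion 1 lema dimension multi} and the set picture in \eqref{condicion 2 lema dimension multi}. One could also phrase the whole argument more symmetrically by observing that \eqref{condicion 1 lema dimension multi} says exactly that every column is constant, and \eqref{condicion 2 lema dimension multi} says exactly that membership in $D$ depends only on the second coordinate; these are visibly the same condition once translated through $a_{ij} = 0 \Leftrightarrow (i,j) \in D$.
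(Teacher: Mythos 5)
Your proposal is correct and follows exactly the same route as the paper, which simply observes that the equivalence is immediate from the defining property $a_{ij}=0 \Leftrightarrow (i,j)\in D$ of the matrix afforded by $D$; you have just written out the bookkeeping in both directions explicitly. No gap to report.
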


\begin{proof}
The result comes immediately from the definition of (hyper)matrix afforded by $D$; that is, for any $a_{ij}\in M$, $a_{ij}=0$ if and only if $(i,j)\in D$ and, otherwise, $a_{ij}=1$.
\end{proof}

As the reader may see, an analogous result may be obtained by replacing $r_2$ by $r_1$. For our next theorem we recall the definition of the set  of optimized roots of $C$ as $\mathcal R(C)=\{\beta \in U \tq sd^\ast (C)=sd_{\beta}^\ast (C)\}$.

\begin{theorem}\label{teorema dimension multiplicada}
 Let $n$ and $r$ be positive integers such that $\gcd(q,nr)=1$.  Let $C$ be a nonzero cyclic code in $\F_q(r)$ with $sd^\ast (C)= \delta >1$ and $\hat\alpha=(\alpha_1,\alpha_2) \in U_n \times \mathcal{R}(C)$. Then, the abelian code $C_n$ in $\F_q(n,r)$ with defining set $\D_{\hat\alpha}(C_n)=\Z_n\times \D_{\alpha_2}(C)$ verifies that $sd^\ast (C_n) = \delta$ and $\dim_{\F_q}(C_n)=n\dim_{\F_q}(C)$.
\end{theorem}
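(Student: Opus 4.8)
The dimension statement is easy: the defining set $\D_{\hat\alpha}(C_n)=\Z_n\times\D_{\alpha_2}(C)$ has cardinality $n\cdot|\D_{\alpha_2}(C)|$, so $\dim_{\F_q}(C_n)=nr-n|\D_{\alpha_2}(C)|=n(r-|\D_{\alpha_2}(C)|)=n\dim_{\F_q}(C)$; one should only check that $\Z_n\times\D_{\alpha_2}(C)$ is indeed a union of $q$-orbits modulo $(n,r)$, which follows because $\D_{\alpha_2}(C)$ is a union of $q$-cyclotomic cosets mod $r$ and the first coordinate ranges over all of $\Z_n$ (so each $q$-orbit of a pair $(i,j)$ stays inside $\Z_n\times\D_{\alpha_2}(C)$). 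The real content is the equality $sd^\ast(C_n)=\delta$, and I would prove it by the two inequalities.

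For the lower bound $sd^\ast(C_n)\ge\delta$, I would exhibit a good root and look at the right variable. Take $\hat\alpha=(\alpha_1,\alpha_2)$ with $\alpha_2\in\mathcal R(C)$ and let $M=M(\D_{\hat\alpha}(C_n))=M(\Z_n\times\D_{\alpha_2}(C))$. By Lemma~\ref{lema dimension multiplicada} (in its row form, i.e.\ with $r_1$ and $r_2$ swapped), condition \ref{condicion 2 lema dimension multi} holds: $(i,j)\in\D_{\hat\alpha}(C_n)$ iff $(x,j)\in\D_{\hat\alpha}(C_n)$ for all $x$; hence every row $H_M(1,i)$ of $M$ is one and the same vector $v$, namely the vector afforded by $\D_{\alpha_2}(C)$ in $\Z_r$. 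In particular there are no zero rows, so $\omega_M(1)=0$, and $\epsilon_M(1)=sd^\ast(v)=sd^\ast(M(\varphi_{\alpha_2,e}))$ for the generating idempotent $e$ of $C$. Now, passing from $M$ to a $q$-orbit submatrix $P\le M$ (which corresponds to enlarging the defining set, equivalently taking an idempotent $e'$ in $C_n$): every such $P$ still has all rows equal to a common vector $v'$, with $v'$ the vector afforded by some union of $q$-cyclotomic cosets mod $r$ containing $\D_{\alpha_2}(C)$ — that is, $v'=M(\varphi_{\alpha_2,e'})$ for an idempotent $e'$ in $C$. Since rows never vanish, $sd^\ast(P)\ge sd_1^\ast(P)=\epsilon_P(1)=sd^\ast(v')\ge msd(v')\ge sd^\ast_{\alpha_2}(C)=\delta$ because $\alpha_2\in\mathcal R(C)$. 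Taking the minimum over $P\le M$ gives $msd(M)\ge\delta$, hence $sd^\ast(C_n)\ge sd^\ast_{\hat\alpha}(C_n)=msd(M)\ge\delta$.

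For the upper bound $sd^\ast(C_n)\le\delta$, I need to show that for every root $\hat\beta=(\beta_1,\beta_2)\in U_n\times U_r$ one has $sd^\ast_{\hat\beta}(C_n)\le\delta$, i.e.\ $msd(M(\varphi_{\hat\beta,e_n}))\le\delta$ for the generating idempotent $e_n$ of $C_n$. The key observation is that the defining set $\D_{\hat\beta}(C_n)$, computed with respect to any $\hat\beta$, is again of the form $\Z_n\times D'$ for some union of $q$-cyclotomic cosets $D'$ mod $r$: indeed, changing the root $\alpha_1\mapsto\beta_1$ in the first coordinate only permutes $\Z_n$, and changing $\alpha_2\mapsto\beta_2$ replaces $\D_{\alpha_2}(C)$ by $\D_{\beta_2}(C)$, the defining set of $C$ w.r.t.\ $\beta_2$. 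So the matrix $M(\varphi_{\hat\beta,e_n})$ again has all rows equal to the common vector $w=M(\varphi_{\beta_2,e})$, and a descent in the $q$-orbit matrix lattice of $M(\D_{\hat\beta}(C_n))$ that realizes $msd$ can be chosen to act only on this common row pattern, so that $msd(M(\D_{\hat\beta}(C_n)))=msd(w)\le sd^\ast_{\beta_2}(C)\le sd^\ast(C)=\delta$. Taking the maximum over $\hat\beta$ yields $sd^\ast(C_n)\le\delta$, and combined with the lower bound, $sd^\ast(C_n)=\delta$.

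The main obstacle I expect is making the ``descent acts only on the common row'' argument precise: a priori an idempotent $e'$ of $C_n$ (equivalently a $q$-orbit submatrix $P\le M$) need not have all rows equal, so I cannot literally say $P$ comes from an idempotent of $C$. The fix is to work at the level of defining sets and $q$-orbits: since $\D_{\hat\beta}(C_n)$ and every larger union of $q$-orbits one meets along the way is $\Z_n$-invariant in the first coordinate, the projection onto the second coordinate sends $q$-orbits mod $(n,r)$ to $q$-cyclotomic cosets mod $r$ and commutes with unions, giving an order-embedding of the relevant sublattice of $q$-orbit matrices for $C_n$ into that for $C$ which divides $sd^\ast$ by exactly the constant-row phenomenon (no zero rows, $\omega_M(1)=0$, $sd_1^\ast=\epsilon$); then $msd$ is preserved on the nose. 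Alternatively, one can invoke Proposition~\ref{matrizdamvarias}: if in the optimal $P$ for $C$ the relevant row/column has $sd^\ast=1$ we are done immediately, and otherwise one iterates the algorithm of \cite[Section IV]{BBCS2}, checking at each step that the $\Z_n$-invariance of the first coordinate is maintained, so the whole ``idempotent chase'' for $C_n$ mirrors the one for $C$.
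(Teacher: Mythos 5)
Your overall strategy (analyse the $q$-orbit matrix $M$ afforded by $\Z_n\times\D_{\beta_2}(C)$, note that its columns are zero or constant~$1$, treat all roots $\hat\beta$, and split into two inequalities) is the same as the paper's, and your dimension count and the observation that $\D_{\hat\beta}(C_n)=\Z_n\times\D_{\beta_2}(C)$ for every $\hat\beta$ are fine. But there is a genuine gap at the central step, and it occurs in \emph{both} of your inequalities: you assert that every nonzero $q$-orbit matrix $P\leq M$ ``still has all rows equal to a common vector $v'=M(\varphi_{\alpha_2,e'})$ for an idempotent $e'$ in $C$'' (and, in the upper bound, that the descent realizing $msd$ ``can be chosen to act only on the common row pattern''). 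This is false: $P\leq M$ means $\D(P)\supseteq\Z_n\times\D_{\alpha_2}(C)$ with $\D(P)$ an arbitrary union of $q$-orbits modulo $(n,r)$, and a $q$-orbit $Q(i,j)$ with $j\notin\D_{\alpha_2}(C)$ need not fill the whole column $\Z_n\times\{j\}$; so idempotents of $C_n$ do not in general have defining sets of product form, $P$ need not have equal rows, and $P$ may even have zero rows (so ``rows never vanish'' also fails). You flag this only for the upper bound, and the sketched repairs do not close it: the ``projection/order-embedding'' idea again presupposes the $\Z_n$-invariance that fails for general $P$, and your appeal to Proposition~\ref{matrizdamvarias} is aimed at ``the optimal $P$ for $C$'' rather than where it actually applies.

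The paper's proof avoids the whole chase through submatrices: for $M=M(\Z_n\times\D_{\beta_2}(C))$ one has $\epsilon_M(2)=1$ (nonzero columns are constant $1$) and $\omega_M(2)+1=sd^\ast(N)$, where $N$ is the vector afforded by $\D_{\beta_2}(C)$, so the column involved in the computation of $sd^\ast(M)$ has strong apparent distance $1$ and Proposition~\ref{matrizdamvarias}, applied directly to $M$, gives $sd^\ast_{\hat\beta}(C_n)=msd(M)=sd^\ast(M)=sd^\ast(N)=sd^\ast_{\beta_2}(C)\leq\delta$, with equality exactly when $\beta_2\in\mathcal R(C)$; maximizing over $\hat\beta$ finishes the proof in one stroke, with no need to understand arbitrary $P\leq M$. (Your lower bound can also be salvaged without the false claim: every nonzero row of $P\leq M$ has support contained in $supp(N)$, hence strong apparent distance at least $sd^\ast(N)=\delta$, so $sd^\ast(P)\geq sd_1^\ast(P)\geq\epsilon_P(1)\geq\delta$; but as written your justification does not stand.)
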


\begin{proof}
Consider any $\hat\beta = (\beta_1, \beta_2) \in U_n\times U_r$ and let $C_n$ be the abelian code such that $\D_{\hat\beta}(C_n) = \Z_n\times \D_{\beta_2}(C)$. It is clear that $\D_{\hat\beta}(C_n)$ satisfies the condition~\textit{(\ref{condicion 2 lema dimension multi})} of Lemma \ref{lema dimension multiplicada}; so, the $q$-orbits matrix afforded by $\D_{\hat\beta}(C_n)$, say $M = M(\D_{\hat\beta}(C_n))$, verifies the condition~\textit{(\ref{condicion 1 lema dimension multi})} of that lemma. Clearly, since $C\neq 0$ there is at least one nonzero column. If  $N = (a_j)_{j\in \Z_r}$ is the $q$-orbit vector afforded by $\D_{\beta_2}(C)$ then $H_M(2, j) = 0$ if and only if $a_j = 0$. (We only focus in $k=2$ on view of Proposition~\ref{matrizdamvarias}.)

So, $\omega_M(2,b)=\omega_N(1,b)$ for all $b\in \Z_r$ and hence $\omega_M(2)=sd^\ast(N)\leq \delta$, and the equality is reached when $\beta_2\in \mathcal R(C)$. On the other hand, we have that $sd^\ast(H_M(2,b))=1$, for any nonzero column; so that, $\epsilon_M(2)=1$, and this happens for any element of $ U_n$. Hence $sd^\ast(M)\leq \delta$.  Now, by Proposition~\ref{matrizdamvarias},  $sd^\ast_{\hat\beta} (C_n) = msd(M) = sd^\ast (M)\leq \delta$, and the equality is reached if $\beta_2\in \mathcal R(C)$; so that $sd^\ast(C_n) =\delta$.

 Finally, since $\dim_{\F_q} (C_n) = |supp(M)|$, we have that $\dim_{\F_q} (C_n) = n \dim_{\F_q} (C)$.
\end{proof}

Now we shall see that this multiplying dimensions technique extends BCH (cyclic) codes to multivariate BCH abelian codes.

\begin{corollary}
 In the setting of theorem above, if $C$ is a BCH code $C=B_q(\alpha_2,\delta,b)$ then $C_n=B_q\left((\alpha_1,\alpha_2),\{2\},\{\delta\},\{b\}\right)$
\end{corollary}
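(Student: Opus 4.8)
The plan is to show that $C_n$ and $B_q\bigl((\alpha_1,\alpha_2),\{2\},\{\delta\},\{b\}\bigr)$ have the same defining set with respect to $\hat\alpha=(\alpha_1,\alpha_2)$, and then invoke the fact that an abelian code is determined by its defining set relative to a fixed element of $U$. By construction (the conclusion of Theorem~\ref{teorema dimension multiplicada}) we have $\D_{\hat\alpha}(C_n)=\Z_n\times\D_{\alpha_2}(C)$. Since $C=B_q(\alpha_2,\delta,b)$ is a (classical, univariate) BCH code, its defining set is $\D_{\alpha_2}(C)=\bigcup_{l=0}^{\delta-2}Q_r(\overline{b+l})$, where $Q_r(a)$ denotes the $q$-cyclotomic coset of $a$ modulo $r$ and $\{\overline b,\dots,\overline{b+\delta-2}\}$ is a list of consecutive integers modulo $r$; here $2\leq\delta\leq r$ because $\delta>1$ and $\delta=sd^\ast(C)\leq d(C)\leq r$. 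Hence I would first write
\[
\D_{\hat\alpha}(C_n)\;=\;\Z_n\times\D_{\alpha_2}(C)\;=\;\bigcup_{l=0}^{\delta-2}\bigl(\Z_n\times Q_r(\overline{b+l})\bigr).
\]

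The heart of the argument is the set identity $\Z_n\times Q_r(a)=\bigcup_{\b{i}\in I(2,a)}Q(\b{i})$, valid for every $a\in\Z_r$, where $I(2,a)=\{\b{i}\in I\tq\b{i}(2)=a\}=\{(x,a)\tq x\in\Z_n\}$ as in Definition~\ref{codigo bch multivariable}. The inclusion ``$\supseteq$'' is immediate from the definition of the $q$-orbit modulo $(n,r)$: the orbit of a pair $(x,a)$ has all its first coordinates in $\Z_n$ and all its second coordinates in $Q_r(a)$. For ``$\subseteq$'', take $(y,z)\in\Z_n\times Q_r(a)$ and write $z=\overline{a\,q^{\,j}}$ for some $j\in\N$; since $\gcd(n,q)=1$, multiplication by $q^{\,j}$ is a bijection of $\Z_n$, so for $x=\overline{y\,q^{-j}}$ we get $(y,z)=\bigl(\overline{x\,q^{\,j}},\overline{a\,q^{\,j}}\bigr)\in Q\bigl((x,a)\bigr)$ with $(x,a)\in I(2,a)$. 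This coprimality step (exactly where the hypothesis $\gcd(q,nr)=1$ is used on the factor $n$) is the only nontrivial point; everything else is bookkeeping with unions, and it is the step I expect to be the main obstacle to write cleanly.

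Substituting the identity into the previous display gives
\[
\D_{\hat\alpha}(C_n)\;=\;\bigcup_{l=0}^{\delta-2}\ \bigcup_{\b{i}\in I(2,\overline{b+l})}Q(\b{i}),
\]
which is precisely the defining set, with respect to $\hat\alpha=(\alpha_1,\alpha_2)$, prescribed by Definition~\ref{codigo bch multivariable} for the bivariate BCH code with $\gamma=\{2\}$, $\delta_2=\delta$ and $b_2=b$; the required list $\{\overline b,\dots,\overline{b+\delta-2}\}$ of consecutive integers modulo $r$ is the one inherited from the univariate BCH structure of $C$, so all the conditions of that definition are met. Therefore $C_n$ and $B_q\bigl((\alpha_1,\alpha_2),\{2\},\{\delta\},\{b\}\bigr)$ share the same defining set with respect to $\hat\alpha$, whence $C_n=B_q\bigl((\alpha_1,\alpha_2),\{2\},\{\delta\},\{b\}\bigr)$.
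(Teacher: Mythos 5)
Your proof is correct and follows essentially the same route as the paper, which simply declares the corollary ``immediate from Definition~\ref{codigo bch multivariable} and the theorem above''; you have merely made the implicit verification explicit by matching $\Z_n\times\D_{\alpha_2}(C)$ with the defining set prescribed by that definition. The identity $\Z_n\times Q_r(a)=\bigcup_{\b{i}\in I(2,a)}Q(\b{i})$, including the use of $\gcd(n,q)=1$ to invert $q^{\,j}$ modulo $n$, is exactly the bookkeeping the authors leave to the reader, and you handle it correctly.
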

\begin{proof}
 Immediate from Definition~\ref{codigo bch multivariable} and the theorem above.
\end{proof}

\begin{example}\rm{
 Set $q=2$, $r=55$, $n=3$, $\hat\alpha=(\alpha_1,\alpha_2) \in U_{3}\times U_{55}$ and let  $C$ be the cyclic code in $\F_2(55)$ with defining set with respect to $\alpha_2$, $D=\D_{\alpha_2}(C)=C_2(1)\cup C_2(5)$. Set $M=M(D)$. A simple inspection on $M$ shows us that $C$ is a BCH code with parameters $C=B_2(\alpha_2,7,13)$ and dimension 25. By the corollary above, we may construct the new bivariate code $C_3$ with defining set $\D_{\hat\alpha}(C_3)=\Z_3 \times D$. So that $C_3= B_2(\hat\alpha,\{2\},\{7\},\{13\})$ in $\F_2(3,55)$, $sd^\ast (C_3)=7$ and $\dim_{\F_2}(C_3)=75$.
}\end{example}

In \cite{BGS} we determine some types of abelian codes whose minimum distance equals their strong apparent distance. We now apply these techniques to go further and construct multiplied dimensional abelian codes with the same property.

\begin{proposition}
  Let $n$ and $r$ be positive integers with $\gcd(q,nr)=1$ and let $C$ be a nonzero cyclic code in $\F_q(r)$ such that  $sd^\ast(C)=d(C)$. Then there exists $\hat\alpha=(\alpha_1,\alpha_2) \in U_n \times \mathcal{R}(C)$ such that the abelian code $C_n$ in $\F_q(n,r)$ with defining set $\D_{\hat\alpha}(C_n)=\Z_n\times \D_{\alpha_2}(C)$ verifies the equality  $d(C_n)=d(C)$.
\end{proposition}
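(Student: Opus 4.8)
The plan is to combine the strong-apparent-distance computation from Theorem~\ref{teorema dimension multiplicada} with the equality hypothesis $sd^\ast(C)=d(C)$, and then to produce a codeword in $C_n$ of weight exactly $d(C)$. First I would invoke Theorem~\ref{teorema dimension multiplicada}: choosing $\hat\alpha=(\alpha_1,\alpha_2)\in U_n\times\mathcal R(C)$ guarantees that $sd^\ast(C_n)=\delta:=sd^\ast(C)=d(C)$ and $\dim_{\F_q}(C_n)=n\dim_{\F_q}(C)$. Since we already know $sd^\ast(C_n)\leq d(C_n)$ by \cite[Theorem 16]{BBCS2}, it remains to show $d(C_n)\leq d(C)$, i.e.\ to exhibit a nonzero codeword of $C_n$ whose Hamming weight is at most $d(C)$.

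The natural candidate is the ``lift'' of a minimum-weight codeword of $C$. Let $c=c(Y)\in C$ be a codeword of weight $d(C)$; I claim that the polynomial $f(X,Y)=c(Y)$, viewed as an element of $\F_q(n,r)$ (constant in the variable $X$), lies in $C_n$ and has the same weight as $c$. The weight claim is immediate: the monomials of $f$ are exactly $X^0Y^j$ for the $j$ in the support of $c$, so $\omega(f)=\omega(c)=d(C)$. For membership, I would check that $\D_{\hat\alpha}(C_n)\subseteq\D_{\hat\alpha}(f)$: if $(a,b)\in\D_{\hat\alpha}(C_n)=\Z_n\times\D_{\alpha_2}(C)$, then $b\in\D_{\alpha_2}(C)$, hence $c(\alpha_2^{\,b})=0$, and therefore $f(\alpha_1^{\,a},\alpha_2^{\,b})=c(\alpha_2^{\,b})=0$; thus $(a,b)\in\D_{\hat\alpha}(f)$. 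Since the defining set of $f$ contains that of the code, $f\in C_n$; and $f\neq 0$ because $c\neq 0$. This gives $d(C_n)\leq\omega(f)=d(C)$.

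Putting the two inequalities together, $d(C)=sd^\ast(C_n)\leq d(C_n)\leq d(C)$, so all are equal and $d(C_n)=d(C)=d(C)$ as desired; the choice of $\hat\alpha$ is precisely the one furnished by Theorem~\ref{teorema dimension multiplicada}, so it exists.

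I do not anticipate a serious obstacle here: the argument is essentially a two-sided squeeze, with the lower bound coming for free from the general theory (Theorem~\ref{teorema dimension multiplicada} plus \cite[Theorem 16]{BBCS2}) and the upper bound from an explicit codeword. The only point that needs a little care is the verification that the constant-in-$X$ lift $f$ genuinely belongs to $\F_q(n,r)$ and to the ideal $C_n$ — i.e.\ that embedding a univariate codeword as a bivariate one constant in the first variable respects the defining-set description — but this is a routine check using that the defining set of $C_n$ is $\Z_n\times\D_{\alpha_2}(C)$ and that $f$ does not depend on the root $\alpha_1$. One could alternatively phrase the lift via the algebra embedding $\F_q(r)\hookrightarrow\F_q(n,r)$ induced by $C_r\hookrightarrow C_n\times C_r$ (inclusion on the second factor), which makes the ideal-membership transparent; I would mention this as the cleaner route if the direct defining-set computation feels heavy.
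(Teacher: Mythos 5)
Your proof is correct, but it takes a genuinely different route from the paper's. The paper does not squeeze $d(C_n)$ between two bounds: it invokes the characterization in \cite[Theorem 1]{BBCS true} to produce a polynomial $b'$ with $Y^h b' \mid Y^r-1$ and $\varphi_{\alpha_2,b'}^{-1}\in C$, pairs it with $a=\sum_{i=0}^{n-1}X^i$ dividing $X^n-1$ in the first variable, and then applies \cite[Theorem 36]{BGS} to conclude directly that $sd^\ast(C_n)=d(C_n)$, after which Theorem~\ref{teorema dimension multiplicada} identifies the common value as $d(C)$. You instead prove the two inequalities separately: the lower bound $d(C_n)\geq sd^\ast(C_n)=sd^\ast(C)=d(C)$ from Theorem~\ref{teorema dimension multiplicada} together with \cite[Theorem 16]{BBCS2}, and the upper bound $d(C_n)\leq d(C)$ by lifting a minimum-weight codeword $c(Y)$ to the constant-in-$X$ element $f(X,Y)=c(Y)$, whose membership in $C_n$ follows correctly from the inclusion $\Z_n\times\D_{\alpha_2}(C)\subseteq\D_{\hat\alpha}(f)$ since, in the semisimple case, the code is exactly the set of polynomials vanishing on its defining set. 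Your argument is more elementary and self-contained (no appeal to \cite{BBCS true} or \cite{BGS}), it shows the upper bound $d(C_n)\leq d(C)$ unconditionally, it proves the equality for \emph{every} $\hat\alpha\in U_n\times\mathcal R(C)$ rather than merely existence, and it recovers the byproduct $sd^\ast(C_n)=d(C_n)$ as well; what the paper's route buys is coherence with the general divisor-based machinery of \cite{BGS}, which extends to more variables and to lifts other than the full-column construction $\Z_n\times\D_{\alpha_2}(C)$, where no such explicit minimum-weight lift is available. One small shared caveat: both your argument and the paper's rely on Theorem~\ref{teorema dimension multiplicada}, which is stated for $sd^\ast(C)=\delta>1$; the degenerate case $d(C)=1$ is trivial (your lifted codeword already has weight $1$), but it would be worth one line to dispose of it.
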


\begin{proof}
 Since $sd^\ast(C)=d(C)$ then, by \cite[Theorem 1]{BBCS true} there exists $b'\in \Le(r)$ and $h\in\Z_r$ such that $Y^{h}b'\mid Y^r-1$ and there exists $\alpha_2\in U_r$ such that $\varphi_{\alpha_2,b'}^{-1}\in C$. Set $b=Y^{h}b'$. On the other hand, setting $a=\sum_{i=0}^{n-1}X^i$, we have that, $a\mid X^n-1$ and that $\varphi_{\alpha_1,a}^{-1}\in \F_q(n)$, for any $\alpha_1\in U_n$. Now, a direct aplication of \cite[Theorem 36]{BGS} gives us that $sd^\ast (C_n)=d(C_n)$ and hence by the theorem above $d(C_n)=d(C)$.
\end{proof}

Combinning this proposition with the construction techniques in \cite{BBCS true} we may find a number of examples of Abelian Codes $C$, satisfying the equality $sd^\ast(C)=d(C)$. For example, from \cite[Corollary 7]{BBCS true}, we know that in the case $r=2^m-1$, for any $m\in\N$, for each $n\in \N$ there are at least $\frac{\phi(r)}{m}$ binary multiplied-dimension abelian codes $C_n$ such that $sd^\ast (C_n)=d(C_n)$.

Let us multiply dimension of some famous BCH codes. It is known (see \cite{S}) for any $h,m\in\N$, if $C$ is a BCH code of length $r=q^m-1$ and designed distance $\delta=q^h-1$ over $\F_q$ then $d(C)=\Delta(C)$. As a direct consequence we have

\begin{corollary}
 Let $h,m\in\N$. If $C$ is a BCH code of length $r=q^m-1$ and designed distance $\delta=q^h-1$ over $\F_q$ then $d(C_n)=sd^\ast(C_n)$, for any $n\in\N$, with $gcd(n,q)=1$.
\end{corollary}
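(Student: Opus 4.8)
The plan is to reduce the statement to the preceding Proposition together with Theorem~\ref{teorema dimension multiplicada} by a short chain of identifications. First I would record that, as recalled in Section~\ref{seccion distancia aparente} (see \cite[pp. 21--22]{Camion}), for a \emph{cyclic} code the strong apparent distance $sd^\ast(C)$ is precisely the maximum of all its BCH bounds, i.e.\ the quantity written $\Delta(C)$ in the sentence just before the corollary. Hence the result borrowed from \cite{S} --- that $d(C)=\Delta(C)$ whenever $C$ is a BCH code of length $r=q^m-1$ and designed distance $\delta=q^h-1$ over $\F_q$ --- says exactly that $sd^\ast(C)=d(C)$ for such a $C$.

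Next I would verify the arithmetic hypotheses needed downstream. Since $r=q^m-1$ we have $\gcd(q,r)=1$, and combined with the assumption $\gcd(n,q)=1$ this gives $\gcd(q,nr)=1$, which is what the preceding Proposition and Theorem~\ref{teorema dimension multiplicada} require. We may assume $\delta=q^h-1>1$, the case $\delta=1$ being trivial (then $C=\F_q(r)$, $C_n=\F_q(n,r)$ and $d(C_n)=1=sd^\ast(C_n)$). Now $C$ is a nonzero cyclic code with $sd^\ast(C)=d(C)$, so the preceding Proposition applies and produces $\hat\alpha=(\alpha_1,\alpha_2)\in U_n\times\mathcal R(C)$ for which the abelian code $C_n\subseteq\F_q(n,r)$ with $\D_{\hat\alpha}(C_n)=\Z_n\times\D_{\alpha_2}(C)$ satisfies $d(C_n)=d(C)$.

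Finally I would feed this very $\hat\alpha$ into Theorem~\ref{teorema dimension multiplicada}: because its second component $\alpha_2$ lies in $\mathcal R(C)$, that theorem gives $sd^\ast(C_n)=sd^\ast(C)$. Chaining the three equalities $d(C_n)=d(C)$, $d(C)=sd^\ast(C)$ and $sd^\ast(C)=sd^\ast(C_n)$ yields $d(C_n)=sd^\ast(C_n)$; since $n$ was an arbitrary positive integer with $\gcd(n,q)=1$, this is the claim.

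I expect the only step needing genuine care --- the ``main obstacle'', such as it is --- to be the first one: one must make sure that the number $\Delta(C)$ in the theorem imported from \cite{S} is literally the strong apparent distance of the cyclic code in the sense of Definition~\ref{apparentdistance}, and not merely some single BCH bound. This is exactly the identification recalled in Section~\ref{seccion distancia aparente}; once it is in place, the corollary is a one-line consequence of the Proposition and Theorem~\ref{teorema dimension multiplicada}, with no further computation.
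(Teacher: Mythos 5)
Your proof is correct and follows essentially the same route as the paper, which presents the corollary as a direct consequence of the preceding Proposition (with Theorem~\ref{teorema dimension multiplicada}) once the classical fact $d(C)=\Delta(C)$ from \cite{S} is read, via the identification of the maximum BCH bound with $sd^\ast(C)$ for cyclic codes, as $sd^\ast(C)=d(C)$. Your explicit checks of $\gcd(q,nr)=1$ and of the degenerate case $\delta=1$ are reasonable additions that the paper leaves implicit.
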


In order to multiply dimension from multivariate Reed Solomon codes we have the following result that we present in the multivariate case instead of bivariate case.

\begin{proposition}\label{multiplicar dimension una cara}
Let  $B_q(\hat\alpha,\gamma,\delta,b)$ be a multivariate BCH code with $\gamma=\{k\}$, $\delta=\{\delta_k\}$ and $b=\{b_k\}$, for some $k\in \{1,\dots,s\}$. If $r_k=q-1$ then we have $sd_{\hat\alpha}^\ast\left(B_q(\hat\alpha,\gamma,\delta,b)\right)=\delta_k$ and $\dim_{\F_q}(B_q(\hat\alpha,\gamma,\delta,b))=\left(r_k-\delta_k+1\right)\prod_{\doble{j=1}{j\neq k}}^s r_j$.
\end{proposition}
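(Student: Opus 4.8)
The plan is to turn the hypothesis $r_k=q-1$ into a statement about the shape of the defining set of $C:=B_q(\hat\alpha,\gamma,\delta,b)$. Since $r_k=q-1$ we have $q\equiv 1\pmod{r_k}$, so for every $\mathbf i\in I$ every element of $Q(\mathbf i)$ has $k$-th coordinate equal to $\mathbf i(k)$; hence $Q(\mathbf i)\subseteq I(k,\mathbf i(k))$ and each hyperplane $I(k,\overline{b_k+l})$ is itself a union of $q$-orbits. Substituting this into Definition~\ref{codigo bch multivariable} with $\gamma=\{k\}$ collapses the inner union, $\bigcup_{\mathbf i\in I(k,\overline{b_k+l})}Q(\mathbf i)=I(k,\overline{b_k+l})$, so that $\D_{\hat\alpha}(C)=\bigcup_{l=0}^{\delta_k-2}I(k,\overline{b_k+l})$: the defining set is the union of the $\delta_k-1$ consecutive full hyperplanes in the $k$-th direction indexed by $\overline{b_k},\dots,\overline{b_k+\delta_k-2}$.

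With this description the dimension is a plain count: $|\D_{\hat\alpha}(C)|=(\delta_k-1)\prod_{j\neq k}r_j$, whence $\dim_{\F_q}(C)=|I|-(\delta_k-1)\prod_{j\neq k}r_j=(r_k-\delta_k+1)\prod_{j\neq k}r_j$.

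For the strong apparent distance I would work with the $q$-orbit hypermatrix $M=M(\D_{\hat\alpha}(C))$. By the previous paragraph a slice $H_M(k,b)$ is the zero hypermatrix exactly for $b\in\{\overline{b_k},\dots,\overline{b_k+\delta_k-2}\}$ and is the all-one hypermatrix otherwise; since $\delta_k\leq r_k$, at least one slice is non-zero and the zero slices form one maximal run of length $\delta_k-1$. A trivial induction on the number of variables using Definition~\ref{DAF hipermatrices} shows that an all-one hypermatrix has strong apparent distance $1$, so $\epsilon_M(k)=1$ and $\omega_M(k)=\delta_k-1$, giving $sd_k^*(M)=\epsilon_M(k)(\omega_M(k)+1)=\delta_k$. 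For $j\neq k$ every slice $H_M(j,b)$ is again a hypermatrix of the same type (the same $\delta_k-1$ full zero hyperplanes in direction $k$, one dimension lower), so by induction on $s$ it has strong apparent distance $\delta_k$ and is never zero, whence $\epsilon_M(j)=\delta_k$, $\omega_M(j)=0$ and $sd_j^*(M)=\delta_k$. Thus $sd^\ast(M)=\delta_k$. Finally, $sd^\ast_{\hat\alpha}(C)=msd(M)$ by \cite[Theorem~18]{BBCS2}, and since the slice that attains $sd^\ast(M)=(\omega_M(k)+1)\,sd^\ast(H_M(k,b))$ is a non-zero (all-one) slice with $sd^\ast(H_M(k,b))=1$, Proposition~\ref{matrizdamvarias} gives $msd(M)=sd^\ast(M)=\delta_k$, as wanted.

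The argument is elementary; the only delicate points are that $\{\overline{b_k},\dots,\overline{b_k+\delta_k-2}\}$ really is a maximal block of zero slices (this is where $2\leq\delta_k\leq r_k$ is used), and that Definition~\ref{DAF hipermatrices}, Proposition~\ref{matrizdamvarias} and \cite[Theorem~18]{BBCS2} are stated in the excerpt only for matrices, so one must pass to their multivariate hypermatrix forms in \cite{BBCS2}; the induction on the number of variables in the computation of $sd^\ast(M)$ is precisely what performs this reduction.
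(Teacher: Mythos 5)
Your proposal is correct and follows essentially the same route as the paper: collapse the $q$-orbits into full hyperplanes using $q\equiv 1\pmod{r_k}$, count $|\D_{\hat\alpha}(C)|$ for the dimension, and then read off $\epsilon_M(k)=1$, $\omega_M(k)=\delta_k-1$ and apply Proposition~\ref{matrizdamvarias} to get $msd(M)=sd^\ast(M)=\delta_k$. Your explicit induction handling the directions $j\neq k$ (showing $sd_j^\ast(M)=\delta_k$) is a detail the paper leaves implicit, but it is not a different argument.
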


\begin{proof}
Set $C=B_q(\hat\alpha,\gamma,\delta,b)$ and suppose that $C\neq 0$.  Since $r_k=q-1$ we have that $l q \equiv l \mod r_k$, for all $l\in \Z_{r_k}$; hence $Q(\b{i}) \subseteq I(k,l)$  for all $\b{i} \in I(k,l)$.  So, $$\D_{\hat\alpha}(C)=\bigcup_{l=\overline{b_k}}^{\overline{b_k+\delta_k-2}} I(k,l).$$ 

Since $\dim_{\F_q}(C)=\prod_{j=1}^s r_j-|\D_{\hat\alpha}(C)|$ we have that $\dim_{\F_q}(C)= \left(r_k-\delta_k+1\right)\prod_{\doble{j=1}{j\neq k}}^s r_j$. Note that if $M=M(\D_{\hat\alpha}(C))$ and $l \in \{\overline{b_k}, \dots, \overline{b_k+\delta_k-2}\}$ then $H_M(k,l)=0$ and all nonzero hypercolumns have entries with constant value 1. Therefore, $\omega_M(k,\overline{b_k})=\delta_k-1$ and $\epsilon_M(k)=1$, which give us $sd_k^\ast (M)=\delta_k$. 

Moreover, following the terminology of Proposition~\ref{matrizdamvarias}, every nonzero hypercolumn $H_M(k,t)$ is an involved hypercolumn and have entries with constant value 1, so that by such proposition $msd(M)=sd^\ast (M)=\delta_k$. Therefore,  $sd_{\hat\alpha}^\ast (B_q(\hat\alpha,\gamma,\delta,b))=\delta_k$.
\end{proof}

The proposition above is applicable to codes that we obtain by using the construction given in Theorem~\ref{teorema dimension multiplicada}, when we start from Reed-Solomon codes. The following proposition allows us to compute dimension and true minimum distance.

\begin{corollary}
Let $R=B_q(\alpha,\delta,b)$ be a Reed-Solomon code of length $r$. Then, for each positive integer $n$ and any $\alpha'\in U_n$, there exists a multivariate BCH code, $C=B_q\left((\alpha',\alpha),\{2\},\{\delta\},\{b\}\right)$, such that $\dim(C)=\left(r-\delta+1\right)n= n \cdot \dim_{\F_q}(R)$ and $d(C)=sd_{\hat\alpha}^\ast(C)=\delta$.
\end{corollary}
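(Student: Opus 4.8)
The plan is to combine the three ingredients already in place: the multiplying-dimension construction of Theorem~\ref{teorema dimension multiplicada}, the dimension/apparent-distance formula of Proposition~\ref{multiplicar dimension una cara}, and the preservation of true minimum distance from the preceding Proposition. First I would observe that a Reed-Solomon code $R=B_q(\alpha,\delta,b)$ of length $r$ is by definition a (cyclic) BCH code over $\F_q$ with $r=q-1$, and it is a standard fact (already recalled in the excerpt, via \cite{S} and the corollary on BCH codes of length $q^m-1$ with $m=1$) that for such a code $sd^\ast(R)=d(R)=\delta$, i.e.\ its strong apparent distance equals its minimum distance and equals the designed distance $\delta$. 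Thus $R$ is exactly a nonzero cyclic code in $\F_q(r)$ satisfying the hypothesis $sd^\ast(R)=d(R)$ of the preceding Proposition, with $\mathcal R(R)$ containing $\alpha$.

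Next I would apply the preceding Proposition to $R$: for any positive integer $n$ with $\gcd(q,n)=1$ (which holds since $n\in\N$ and any constraint on $n$ is inherited) and any $\alpha'\in U_n$, the abelian code $C$ in $\F_q(n,r)$ with defining set $\D_{(\alpha',\alpha)}(C)=\Z_n\times\D_\alpha(R)$ satisfies $d(C)=d(R)=\delta$. By the Corollary following Theorem~\ref{teorema dimension multiplicada}, this $C$ is precisely the multivariate BCH code $B_q\big((\alpha',\alpha),\{2\},\{\delta\},\{b\}\big)$, so the identification claimed in the statement is automatic. It remains only to pin down $\dim(C)$ and $sd^\ast_{(\alpha',\alpha)}(C)$.

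For the dimension and the apparent distance I would invoke Proposition~\ref{multiplicar dimension una cara} directly, applied in the bivariate setting $s=2$, $k=2$, with $r_2=r=q-1$ and $r_1=n$. That proposition gives $sd^\ast_{(\alpha',\alpha)}(C)=\delta$ and $\dim_{\F_q}(C)=(r_2-\delta+1)\cdot r_1=(r-\delta+1)n$. Since $\dim_{\F_q}(R)=r-\delta+1$ (again because $R$ is a Reed-Solomon code, or directly from $|\D_\alpha(R)|=\delta-1$ as each $q$-cyclotomic coset modulo $q-1$ is a singleton), we get $\dim(C)=n\cdot\dim_{\F_q}(R)$. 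Chaining $d(C)=\delta=sd^\ast_{(\alpha',\alpha)}(C)$ finishes the proof.

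The only point requiring a little care—what I would flag as the ``obstacle,'' though it is really just bookkeeping—is checking that all hypotheses line up: that $R$ genuinely falls under the ``$r_k=q-1$'' hypothesis of Proposition~\ref{multiplicar dimension una cara} (true by the definition of a Reed-Solomon code), that $\alpha\in\mathcal R(R)$ so the preceding Proposition's existence statement can be taken with this specific $\alpha$ (true since $sd^\ast(R)=sd^\ast_\alpha(R)=\delta$ already at $\alpha$), and that the defining-set descriptions from the two propositions agree with Definition~\ref{codigo bch multivariable}. None of these is deep; the corollary is essentially a specialization of machinery already assembled, and the proof is a two-line synthesis once these identifications are made explicit.
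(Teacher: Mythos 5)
Your synthesis is in the spirit of the paper's own (very terse) proof, which simply says ``apply \cite[Section 5.2]{BGS} to the results above'': you are making explicit exactly which results above are being combined (Theorem~\ref{teorema dimension multiplicada} and its corollary for the BCH identification, Proposition~\ref{multiplicar dimension una cara} for $\dim(C)=(r-\delta+1)n$ and $sd^\ast_{\hat\alpha}(C)=\delta$, and the minimum-distance preservation result), together with the standard fact that a Reed--Solomon code satisfies $sd^\ast(R)=d(R)=\delta$. The dimension and apparent-distance parts of your argument are correct and are exactly what Proposition~\ref{multiplicar dimension una cara} delivers in the case $s=2$, $k=2$, $r_2=q-1$.

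The one genuine soft spot is the step $d(C)=\delta$ for the \emph{specific} pair $(\alpha',\alpha)$. The preceding Proposition is only an existence statement: it produces \emph{some} $\hat\alpha=(\alpha_1,\alpha_2)\in U_n\times\mathcal R(C)$ for which $d(C_n)=d(C)$, and its proof takes $\alpha_2$ to be the root supplied by \cite[Theorem 1]{BBCS true} (one admitting a divisor $b'$ of $Y^r-1$ with $\varphi_{\alpha_2,b'}^{-1}\in C$). Your justification ``$\alpha\in\mathcal R(R)$ because $sd^\ast_\alpha(R)=\delta$'' shows membership in $\mathcal R(R)$ but does not by itself show that the Proposition's conclusion holds at that particular $\alpha$, so as written the quantifiers do not quite match the statement you are proving, which fixes $C=B_q((\alpha',\alpha),\{2\},\{\delta\},\{b\})$ in advance. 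The gap is easily closed without the Proposition at all: since $\D_{(\alpha',\alpha)}(C)=\Z_n\times\{\overline{b},\dots,\overline{b+\delta-2}\}=\Z_n\times\D_\alpha(R)$, every codeword $f(Y)\in R$, viewed inside $\F_q(n,r)$, vanishes at all points $(\alpha'^a,\alpha^j)$ with $j\in\D_\alpha(R)$, hence $R\subseteq C$ and $d(C)\leq d(R)=\delta$; combined with $d(C)\geq sd^\ast_{(\alpha',\alpha)}(C)=\delta$ from Proposition~\ref{multiplicar dimension una cara}, this gives $d(C)=sd^\ast_{\hat\alpha}(C)=\delta$ for the stated code and for every $\alpha'\in U_n$ (with the global hypothesis $\gcd(n,q)=1$ understood, as it must be for the semisimple setting). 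With that one-line repair your argument is complete and coincides in substance with what the paper delegates to \cite{BGS}.
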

\begin{proof}
 Apply \cite[Section 5.2]{BGS} to results above.
\end{proof}

As a direct consequence of the corollary above, we may conclude that abelian codes that are multiplied Reed-Solomon codes are never MDS codes. In fact, we do not know MDS abelian codes having minimum distance greater than 1.

\begin{example}
 A popular Reed Solomon code is the $RS(255,223)$ (see \cite{LyM}), in our notation $R=B_{2^8}(\alpha,33,0)$. By multiplying its dimension by $5$, we get  $C_5=B_{2^8}\left((\alpha',\alpha),\{2\},\{33\},\{0\}\right)$, such that $\dim(C_5)=1115$ and $d(C_5)=33$.
\end{example}

An interesting task would be to find decoding methods for these kinds of codes.

\section{Conclusion}

In this note, we gave a particular construction of bivariate Abelian Codes from cyclic codes,  multiplying their dimension and preserving their apparent distance. In the case of cyclic codes whose maximum BCH bound equals its minimum distance the obtained abelian code verifies the same property; that is, the strong apparent distance and the minimum distance coincide.  This construction may be used to multiply Reed-Solomon codes to abelian codes.

%

\end{document}